\DeclareSymbolFont{bbold}{U}{bbold}{m}{n}
\DeclareSymbolFontAlphabet{\mathbbold}{bbold}
\newcommand{\vect}[1]{\mathbbold{#1}}
\newcommand\oprocendsymbol{\hbox{$\triangle$}}
\newcommand\oprocend{\relax\ifmmode\else\unskip\hfill\fi\oprocendsymbol}
\newtheorem{theorem}{Theorem}[section]
\newtheorem{lemma}[theorem]{Lemma}
\newtheorem*{lemma*}{Lemma}
{\theoremstyle{definition}
	\newtheorem{definition}[theorem]{Definition}
	
	\newtheorem{example}[theorem]{Example}
	
	\newtheorem*{example*}{Example}
}
\numberwithin{equation}{section}
\newcommand{\bullobib}{\bibliographystyle{plainurl}\bibliography{alias,Main,FB,New}}
\title{A Tutorial on Multivariate $k$-Statistics and their Computation}
\author{Kevin D. Smith$^1$}
\date{%
	$^1$Center for Control, Dynamical Systems, and Computation\\
	University of California, Santa Barbara\\[2ex]%
	\today
}
\DeclareMathOperator{\E}{\mathbb E}
\DeclareRobustCommand{\stirling}{\genfrac\{\}{0pt}{}}
\begin{document}
	
	\maketitle
	
	\begin{abstract}
		This document aims to provide an accessible tutorial on the unbiased estimation 
		of multivariate cumulants, using $k$-statistics. We offer an explicit and 
		general formula for multivariate $k$-statistics of arbitrary order. We also prove 
		that the $k$-statistics are unbiased, using M\"obius inversion and rudimentary 
		combinatorics. Many detailed examples are considered throughout the paper. We 
		conclude with a discussion of $k$-statistics computation, including the challenge 
		of time complexity, and we examine a couple of possible avenues to improve the 
		efficiency of this computation. The purpose of this document is threefold: to 
		provide a clear introduction to $k$-statistics without relying on specialized 
		tools like the umbral calculus; to construct an explicit formula for 
		$k$-statistics that might facilitate future approximations and faster algorithms; 
		and to serve as a companion paper to our Python library \textit{PyMoments} 
		\cite{KDS:20}, which implements this formula.
	\end{abstract}

	\footnotetext[2]{This work was supported in part by the U.S. Defense Threat Reduction 
	Agency, under grant HDTRA1-19-1-0017.}
	
	\section{Introduction}
	
	Cumulants are a class of statistical moments that succinctly describe univariate and 
	multivariate distributions. Low-order cumulants are quite familiar: first-order 
	cumulants are means, second-order cumulants are covariances, and third-order 
	cumulants are third central moments. But fourth-order cumulants and larger are 
	difficult to express in terms of central or raw moments. Still, higher-order 
	cumulants have found a variety of applications, largely because they preserve the 
	intuition of central moments while also featuring desirable multilinearity and 
	additivity properties. Various applications have exploited these properties to solve 
	problems in statistics, signal processing, control theory, and other fields.
	
	This note concerns the unbiased estimation of cumulants from data. The 
	canonical unbiased estimators of cumulants, known as \textit{Fisher's 
	$k$-statistics}, or more simply \textit{$k$-statistics}, have been around since 
	Fisher's seminal work on univariate $k$-statistics in 1930 \cite{RAF-30} and 
	Wishart and Kendall's later work extending them to the multivariate case (e.g., 
	\cite{MGK-40}). These papers provide formulas for low-order $k$-statistics, and they 
	describe the process of 
	symbol manipulation that can be used to construct higher-order formulas, but they 
	stop short of an explicit, general expression for $k$-statistics. The objective of 
	this note is to provide such an expression, as well as a self-contained derivation. 
	
	The technical content of this paper is not really novel. We highlight the deep 
	connection that cumulants and $k$-statistics have with M\"obius inversion on the 
	partition lattice, but this connection has been known since at least 1983 
	\cite{TPS-83} and has been noted in subsequent work \cite{GCR-JS:00, PM-08}. The same 
	formulas for multivariate $k$-statistics derived here have also been constructed 
	through the umbral calculus \cite{EDN-GG-DS:08, EDN-GG-DS:09}, though 
	interpreting these formulas requires a nontrivial investment of effort into learning 
	the umbral calculus formalism. Software packages are also available for 
	$k$-statistics. \textit{MathStatica}, a proprietary \textit{Mathematica} package, 
	provides methods for symbolic $k$-statistic formulas \cite{CR-MDS:02}. An R package, 
	\textit{kStatistics} \cite{EDN-GG:19}, is available to compute multivariate 
	$k$-statistics of data samples. Our own library \textit{PyMoments} implements 
	multivariate $k$-statistics in Python \cite{KDS:20}. Thus, rather than 
	providing new insight into the problem of cumulant estimation, the aim of this paper 
	is to serve as a quick, accessible, and explicit reference for multivariate 
	$k$-statistics. 
	
	We also hope this paper will invite discussion regarding the efficient computation of 
	$k$-statistics. The time complexity of computing these statistics scales poorly with 
	order, so in order to make higher-order $k$-statistics useful in real-world 
	applications, it is necessary to optimize their efficiency. We briefly discuss some 
	possible avenues toward efficient computation or approximation of these statistics, 
	but this topic is still under-explored.
	
	The paper is organized as follows. In the remainder of this section, we introduce the 
	preliminary mathematical concepts that are needed to derive $k$-statistics, including 
	the definition of cumulants themselves (Section \ref{def:cum}) and their connection 
	to M\"obius inversion on the partition lattice (Section \ref{sect:mobius}). Section 
	\ref{sect:kstat} contains the main results---a definition of and explicit formula for 
	$k$-statistics in terms of raw sample moments (Definition \ref{def:kstat}), and a 
	proof of their unbiased estimation that reveals a derivation of these statistics 
	(Section \ref{sect:proof}). The last part of the paper, Section 
	\ref{sect:computation}, briefly discusses some points on the computational efficiency 
	of evaluating $k$-statistics. 
		
	\subsection{Preliminaries}
		
	\paragraph*{Multisets and Multi-indices}
	A \textit{multiset} is a set that allows for repeated elements. There are two ways to 
	represent a multiset. The simplest representation is explicit enumeration of the 
	elements, e.g., $[x_1, x_2, \dots, x_n]$, where it is possible that $x_i = x_j$. When 
	the universe of possible elements in the multiset is clear from context, another 
	representation is to use a \textit{multi-index}, which assigns an integer 
	multiplicity to every element in the universe. For example, when we are considering 
	multisets with elements drawn from $\{1, 2, \dots, n\}$, we can encode the multiset 
	using a multi-index $\alpha: \{1, 2, \dots, n\} \to \mathbb Z_{\ge 0}$, where 
	$\alpha(i)$ is the multiplicity of $i$ in the multiset. We will often use ``multiset 
	generator'' notation to describe a multiset; or example, $[i ~\text{mod}~ 2 \mid i 
	\in \{1, 2, 3, 4, 5\}] = [0, 0, 1, 1, 1]$.
	
	\begin{figure} 
		\centering
		\includegraphics[width=0.4\linewidth]{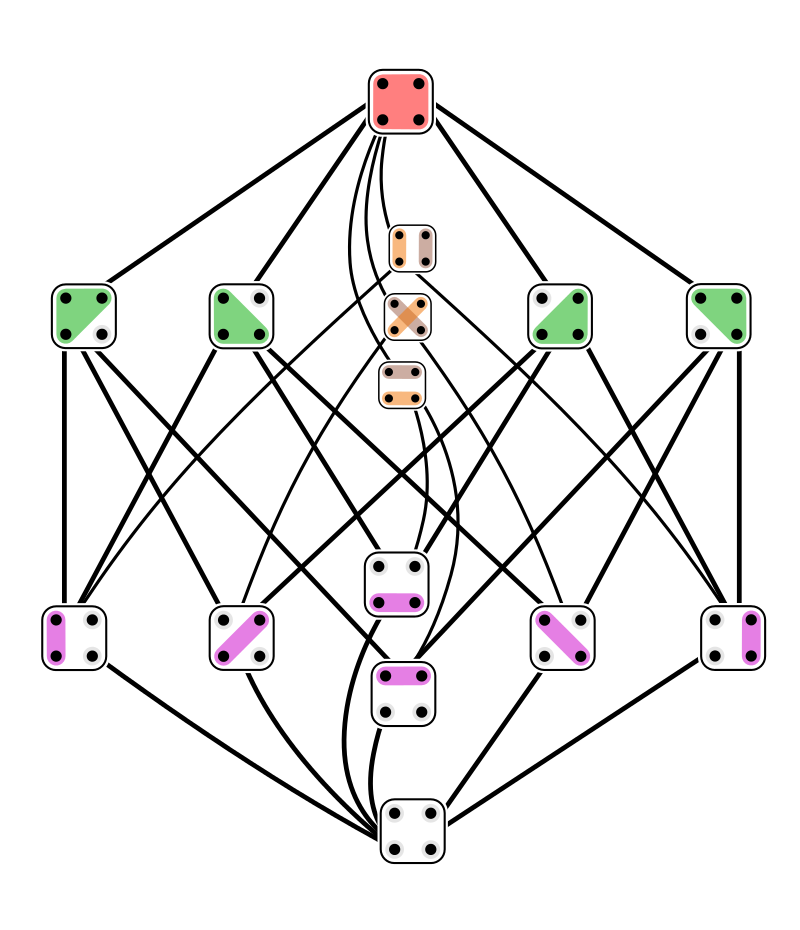}
		\caption{Visual representation of the partition lattice of a 4-element set. Edges 
		in the diagram indicate that the lower partition refines the upper partition. 
		This image is credited to Tilman Piesk and distributed under a CC BY 3.0 license.}
		\label{fig:partitions}
	\end{figure}
	
	\paragraph*{Partitions}
	A \textit{partition} of a set $S$ is a collection of mutually disjoint subsets $B_1, 
	B_2, \dots, B_k \subseteq S$, called \textit{blocks}, such that $\bigcup_{i = 1}^k 
	B_i = S$. We denote partitions as sets of blocks, e.g., $\pi = \{B_1, B_2, \dots, 
	B_k\}$. The \textit{size} of a partition is the number of blocks: $|\pi| = k$. Given 
	two partitions $\pi, \rho$ of the same set, we say that $\pi$ \textit{refines} 
	$\rho$, and write $\pi \le \rho$, if every block in $\pi$ is the subset of a block in 
	$\rho$. The \textit{partition lattice} $(\Pi_n, \le)$ is the poset of partitions of 
	the set $\{1, 2, \dots, n\}$, with refinement as a partial order. Within the 
	partition lattice, note that the unique partition with one block is the unique 
	maximum element; similarly, the unique partition with $n$ blocks is the unique 
	minimum element. Figure \ref{fig:partitions} provides a visual representation of the 
	partition lattice $\Pi_4$.
	
	The number of partitions on $\Pi_n$ with size $|\pi| = k$ is given by 
	\textit{Stirling's number of the second kind}, and is given by
	\[
		\stirling{n}{k} = \frac{1}{k!} \sum_{i=0}^{k-1} (-1)^i \binom k i (k - i)^n
	\]
	for positive $n$. The total number of partitions in $\Pi_n$ is known as 
	\textit{Bell's number}:
	\[
		B_n = |\Pi_n| = \sum_{k=1}^n \stirling n k
	\]
	
	\paragraph*{General Notation}
	Given two non-negative integers $k \le n$, the \textit{falling factorial} is the 
	quantity $(n)_k = n (n - 1) \cdots (n - k + 1)$.
%
%
	
	\subsection{Cumulants} \label{def:cum}
	We begin with a formal definition of cumulants and (implicitly) a review of our 
	notational conventions. Given a random vector $X = 
	\begin{pmatrix} X_1 & X_2 & \cdots & X_n \end{pmatrix}$, where $X_i$ are scalar 
	random variables, define the \textit{moment generating 
	function} $M_X: \mathbb R^n \to \mathbb R$ and the \textit{cumulant generating 
	function} $K_X: \mathbb R^n \to \mathbb R$ by
	\[
		M_X(t) = \E\left[ e^{t^\top X} \right], \qquad K_X(t) = \log M_X(t)
	\]
	Assuming that $M_X(t)$ and $K_X(t)$ admit Taylor expansions about $t = \vect 0_n$, we 
	can write
	\begin{align*}
		M_X(t) &= 1 + \sum_{i=1}^n m_{[i]} t_i + \frac{1}{2} \sum_{i, j = 1}^n m_{[i,j]} 
		t_i t_j + \sum_{i, j, k = 1}^n m_{[i,j,k]} t_i t_j t_k + \cdots \\
		K_X(t) &= \sum_{i=1}^n \kappa_{[i]} t_i + \frac 1 2 \sum_{i, j = 1}^n 
		\kappa_{[i,j]} 
		t_i t_j + \sum_{i, j, k = 1}^n \kappa_{[i,j,k]} t_i t_j t_k + \cdots
	\end{align*}
	where the coefficients $m_\alpha$ and $\kappa_\alpha$ are defined for any multiset 
	from the indices $\{1, 2, \dots, n\}$. The coefficients in the expansion of the 
	moment generating function are familiar---for example, 
	\[
		m_{[i,j,k]} = \left. \frac{\partial^3M_X(t)}{\partial t_i \partial t_j \partial 
		t_k} \right|_{t = \vect 0_n}
		= \left. \E \left[
			\frac{\partial^3}{\partial t_i \partial t_j \partial t_k} e^{t^\top X}
		\right] \right|_{t = \vect 0_n}
		= \E \left[X_i X_j X_k\right]
	\]
	is a third-order raw moment. Of course, this relationship holds true in general: the 
	coefficients $m_\alpha$ in the expansion of the moment generating function are 
	precisely the raw moments of $X$.
	
	Cumulants are defined similarly, as coefficients in the Taylor expansion of 
	$K_X(t)$. Formally, given any multiset from $\{1, 2, \dots, n\}$, or equivalently, 
	given any multi-index $\alpha$ on $\{1, 2, \dots, n\}$, we define the cumulant
	\begin{equation}
		\kappa_\alpha(X) = \left. \frac{\partial^{|\alpha|} 
		K_X(t)}{\partial_{t_1}^{\alpha(1)} 
		\partial_{t_2}^{\alpha(2)} \cdots \partial_{t_n}^{\alpha(n)}} \right|_{t =\vect 
		0_n}
		\label{eq:cum}
	\end{equation}
	as the coefficient of the term $\frac{1}{|\alpha|!} t_1^{\alpha(1)} t_2^{\alpha(2)} 
	\cdots t_n^{\alpha(n)}$ in the series expansion of $K_X(t)$. The \textit{order} of a 
	cumulant is the size $|\alpha|$. Low-order cumulants have 
	familiar interpretations, as the next few examples demonstrate:
	
	\begin{example}[First-Order Cumulants] \label{ex:cum-first}
		Consider a single random variable $X_i$. The first-order cumulant of this 
		variable is
		\[
			\kappa_{[i]}(X) = \left. \frac{\partial K_X(t)}{\partial_{t_i}} \right|_{t = 
			\vect 0_n}
			= \left. \frac{1}{M_X(t)} \frac{\partial M_X(t)}{\partial t_i} \right|_{t = 
			\vect 0_n}
			= m_{[i]} = \E[X_i]
		\]
		Thus, first-order cumulants are identical to first-order raw moments, i.e., means.
		\oprocend
	\end{example}

	\begin{example}[Second-Order Cumulants] \label{ex:cum-second}
		Consider a pair of random variables $X_i, X_j$, possibly repeating. The 
		second-order cumulant of this pair of variables is
		\begin{align*}
			\kappa_{[i,j]}(X) = \left.
				\frac{\partial^2 K_X(t)}{\partial t_i \partial t_j}
			\right|_{t = \vect 0_n}
			&= \left. \frac{1}{M_X(t)} \frac{\partial^2 M_X(t)}{\partial t_i \partial 
			t_j} - \frac{1}{M_X(t)^2} \frac{\partial M_X(t)}{\partial t_i} \frac{\partial 
			M_X(t)}{\partial t_j} \right|_{t = \vect 0_n} \\
			&= m_{[i, j]} - m_{[i]} m_{[j]} \\
			&= \E[X_i X_j] - \E[X_i] \E[X_j] \\
			&= \text{cov}(X_i, X_j)
		\end{align*}
		Thus, second-order cumulants are identical to covariances. 
		\oprocend
	\end{example}

	\begin{example}[Third-Order Cumulants] \label{ex:cum-third}
		Consider a triple of random variables $X_i, X_j, X_k$, possibly repeating. After 
		evaluating and simplifying the appropriate third derivative, we find that
		\begin{align*}
			\kappa_{[i, j, k]}(X) = \left.
			\frac{\partial^3 K_X(t)}{\partial t_i \partial t_j \partial t_k}
			\right|_{t = \vect 0_n}
			&= 2 m_{[i]} m_{[j]} m_{[k]} - m_{[i]} m_{[i,j]} - m_{[j]} m_{[i,k]} - 
			m_{[k]} m_{[i,j]} + m_{[i,j,k]} 
		\end{align*}
		In particular, if $X_i = X_j = X_k$, we obtain the third \textit{univariate} 
		cumulant of the random variable $X_i$:
		\begin{align*}
			\kappa_{[i, i, i]}(X) &= 2 \E[X_i]^3 - 3 \E[X_i] \E[X_i]^2 + \E[X_i^3] \\
			&= \E[(X_i - \E[X_i])^3] \\
			&= \text{var}(X_i)^{3/2} \text{skew}(X_i)
		\end{align*}
		where $\text{skew}(\cdot)$ is the moment coefficient of skewness. In other words, 
		third univariate cumulants are identical to third central moments, thereby 
		quantifying	the skewness of a distribution.
	\end{example}

	\subsection{Cumulants, Raw Moments, and M\"obius Inversion}
	\label{sect:mobius}
	
	As the previous three examples suggest, cumulants and raw moments are closely 
	related---after all, cumulants and raw moments are the series coefficients of
	functions related by a log transform. High-order derivatives of the logarithm result 
	in an abundance of terms, which rapidly get out of hand when trying to derive 
	high-order cumulants manually. Fortunately, by invoking the multivariate version of 
	Fa\`a di Bruno's formula, we can use some powerful ideas from combinatorics to 
	compactly represent these computations. This subsection will reveal a general formula 
	for expression cumulants in terms of raw moments, which will prove useful in our 
	later derivation of $k$-statistics.
	
	It is a bit simpler to start in the reverse direction, writing raw moments in terms 
	of cumulants, and then using M\"obius inversion to obtain our desired expressions. 
	Writing raw moments in terms of cumulants is really a straightforward application of 
	Fa\`a di Bruno's generalization of the chain rule:
	
	\begin{lemma}[Fa\`a di Bruno's Formula]
		Let $n$ and $k$ be positive integers, and let $f: \mathbb R \to \mathbb R$ and 
		$g: \mathbb R^k \to \mathbb R$ be a pair of functions that are differentiable to 
		order $n$. Then
		\begin{equation}
			\frac{\partial^n f(g(x))}{\partial x_1 \partial x_2 \cdots \partial x_n} 
			= \sum_{\pi \in \Pi_n} f^{(|\pi|)}(g(x)) \prod_{B \in \pi} 
			\frac{\partial^{|B|} g(x)}{\prod_{i \in B} \partial x_i}
			\label{eq:fdb}
		\end{equation} 
		where $f^{(j)}(\cdot)$ denotes the $j$th derivative, $\pi$ loops through every 
		partition of the set $\{1, 2, \dots, n\}$, $B$ 
		loops through each block in a given partition, and $i$ loops through each element 
		of $\{1, 2, \dots, n\}$ contained within a given block.
	\end{lemma}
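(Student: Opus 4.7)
The plan is to proceed by induction on $n$. The base case $n = 1$ is the ordinary chain rule: $\Pi_1$ consists of the single partition $\{\{1\}\}$, so the right-hand side reduces to $f'(g(x)) \cdot \partial g/\partial x_1$, which agrees with $\partial [f(g(x))]/\partial x_1$.

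For the inductive step, I would assume the formula holds at order $n$ and apply $\partial/\partial x_{n+1}$ to both sides. On the right-hand side, the product rule expands each summand indexed by $\pi \in \Pi_n$ into $|\pi| + 1$ new terms. First, differentiating the outer factor $f^{(|\pi|)}(g(x))$ via the chain rule produces $f^{(|\pi|+1)}(g(x)) \cdot \partial g/\partial x_{n+1}$ multiplied by the original block product; this corresponds to the partition of $\{1, \ldots, n+1\}$ obtained by adjoining $\{n+1\}$ to $\pi$ as a new singleton block. Second, for each block $B \in \pi$, differentiating the factor associated to $B$ inserts $\partial/\partial x_{n+1}$ into the product of partials over $B$, which corresponds to replacing $B$ by $B \cup \{n+1\}$ in $\pi$.

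The combinatorial heart of the argument is the bijection between $\Pi_{n+1}$ and the set of pairs $(\pi, c)$, where $\pi \in \Pi_n$ and $c$ is either the choice ``adjoin the singleton $\{n+1\}$'' or a choice of some block $B \in \pi$ to be enlarged to $B \cup \{n+1\}$. Every $\pi' \in \Pi_{n+1}$ arises from exactly one such pair, as one recovers $\pi$ from $\pi'$ by removing $n+1$ from its block, deleting that block entirely if it was the singleton $\{n+1\}$. Re-indexing the expanded sum on the right-hand side along this bijection yields exactly $\sum_{\pi' \in \Pi_{n+1}} f^{(|\pi'|)}(g(x)) \prod_{B \in \pi'} \partial^{|B|} g/\prod_{i \in B} \partial x_i$, completing the induction.

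I do not anticipate any real obstacle; the argument is essentially mechanical once the product rule is applied. The only point requiring care is the bookkeeping that guarantees each partition of $\{1, \ldots, n+1\}$ is produced with multiplicity exactly one, which is handled cleanly by the bijection above. If a cleaner, non-inductive route is preferred, one could alternatively derive the identity by comparing the Taylor coefficients of $f \circ g$ with the coefficients obtained from substituting the Taylor series of $g$ into that of $f$, where the set-partition structure emerges from multinomial bookkeeping; but the inductive proof is more self-contained and better suited to a tutorial exposition.
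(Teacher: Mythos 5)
Your inductive argument is correct, but note that the paper does not prove this lemma at all: Fa\`a di Bruno's formula is quoted as a known result, with the proof delegated to the cited reference \cite{MH:06}, since the paper only needs it as an entry point for the partition-lattice combinatorics. Your proof is the standard self-contained one, and the key step is handled properly: the base case is the chain rule, the product rule splits the $\pi$-term into $|\pi|+1$ pieces, and the bijection $\pi' \mapsto (\pi, c)$ --- recover $\pi$ by deleting $n+1$ from its block, discarding the block if it was the singleton $\{n+1\}$ --- guarantees each partition of $\{1,\dots,n+1\}$ is produced exactly once, with the derivative order of $f$ and the block factors matching in both cases. The only point worth flagging is regularity bookkeeping: the displayed formula (and your induction, which tacitly appends $x_{n+1}$ as the outermost differentiation and writes block factors without specifying an order of differentiation) implicitly assumes enough smoothness for mixed partials to commute, e.g.\ $C^n$ rather than bare $n$-fold differentiability; this caveat is inherited from the statement itself and is not a defect specific to your argument. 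Compared with your alternative suggestion of matching Taylor coefficients of $f\circ g$ against the composition of series, the induction is indeed more elementary and fits the tutorial style of the paper, while the series route explains more directly \emph{why} set partitions index the terms; either would serve, whereas the paper simply buys brevity by citing the literature.
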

	\noindent
	See \cite{MH:06} for an overview and proof of this formula (as well as a different 
	discussion of the application we are about to see). Equation \eqref{eq:fdb} is an 
	entrypoint that will allow us to apply the combinatorics of the partition lattice to 
	the study of cumulants.
	
	In order to write raw moments in terms of cumulants, we can express $M_X(t) = 
	f(g(t))$, where $f(\cdot) = \exp(\cdot)$ and $g(t) = K_X(t)$. Therefore, applying 
	\eqref{eq:fdb}, we obtain
	\begin{align*}
		m_{[i_1, i_2, \dots, i_k]} &= \left.\frac{\partial^n 
		e^{K_X(t)}}{\partial t_{i_1} \partial t_{i_2} \cdots \partial t_{i_k}} \right|_{t 
		= \vect 0_n} \\
		&= \sum_{\pi \in \Pi_k} \left. \frac{d \exp(y)}{d y} \right|_{y = K_X(\vect 0_n)}
		\left( \prod_{B \in \pi}  \left. \frac{\partial^{|B|} K_X(t)}{\prod_{j \in B} 
		\partial t_{i_j}} \right|_{t =\vect 0_n} \right) \\
		&= \sum_{\pi \in \Pi_k} \prod_{B \in \pi} \left. \frac{\partial^{|B|} 
		K_X(t)}{\prod_{i \in B} \partial t_{i_j}} \right|_{t =\vect 0_n} 
	\end{align*}
	We recognize the inner derivative as a cumulant of the random vector $X = 
	\begin{pmatrix} X_1 & X_2 & \cdots & X_n \end{pmatrix}$, in particular, the cumulant 
	corresponding to the multiset of indices $[i_j \mid j \in B]$. Therefore, simplifying 
	this equation, we can express the raw moment in terms of cumulants, as follows:
	\begin{equation}
		m_{[i_1, i_2, \dots, i_k]} = \sum_{\pi \in \Pi_k} \prod_{B \in \pi} \kappa_{[i_j 
		\mid j \in B]}
		\label{eq:mom-as-cumulant}
	\end{equation} 
	
	Let us consider some small examples of this formula:
	
	\begin{example}[Low-Order Raw Moments]
		Let us consider how to construct fist-order and second-order raw moments from 
		cumulants. Given any particular random variable $X_{i}$ from the vector $X = 
		\begin{pmatrix} X_1 & X_2 & \cdots & X_n \end{pmatrix}$, we know that $m_{[i]} = 
		\kappa_{[i]}$, either by applying \eqref{eq:mom-as-cumulant} or recalling Example 
		\ref{ex:cum-first}. Next, given a pair of (possibly identical) random variables 
		$X_{i_1}, X_{i_2}$ from the vector, we use \eqref{eq:mom-as-cumulant} to compute
		\begin{align*}
			m_{[i_1, i_2]} = \sum_{\pi \in \Pi_2} \prod_{B \in \pi} \kappa_{[i_j \mid j 
			\in B]}
			&= \prod_{B \in \{ \{1\}, \{2\} \}} \kappa_{[i_j \mid j 
				\in B]} + \prod_{B \in \{ \{1, 2\} \}} \kappa_{[i_j \mid j 
				\in B]} \\
			&= \kappa_{[i_1]} \kappa_{[i_2]} + \kappa_{[i_1, i_2]}
		\end{align*} 
		Combining these two results, we see that
		\[
			\kappa_{[i_1, i_2]} = m_{[i_1, i_2]} - \kappa_{[i_1]} \kappa_{[i_2]}
			= m_{[i_1, i_2]} - m_{[i_1]} m_{[i_2]} = \text{cov}(X_{i_1}, X_{i_2})
		\]
		replicating our conclusion from Example \ref{ex:cum-second}.
		\oprocend
	\end{example}
	
	This example hints at the possibility of inverting \eqref{eq:mom-as-cumulant} to 
	express cumulants as functions of central moments. It turns out that stating a 
	general formula for this inversion is straightforward, thanks to M\"obius inversion. 
	The general topic of M\"obius inversion on posets is beyond the 
	scope of this note, but many good lecture notes are available online for an easy 
	introduction (for example, \cite{PB:15}), or the reader may refer to a text like 
	Aigner \cite{MA:97} for a more detailed and rigorous discussion. Fortunately, 
	M\"obius inversion on the partition lattice is a standard example in this area of 
	combinatorics, so we can cut to the chase and state the needed result:
	
	\begin{lemma}[M\"obius Inversion on the Partition Lattice: Part I] 
	\label{lem:mobius-1}
		Consider two functions $f, g: \Pi_n \to \mathbb R$. The following are equivalent:
		\begin{enumerate}
			\item
			\begin{equation}
				f(\pi) = \sum_{\rho \le \pi} g(\rho), \qquad \forall \pi \in \Pi_n
				\label{eq:mi-1-f}
			\end{equation}
			\item
			\begin{equation}
				g(\pi) = \sum_{\rho \le \pi} (-1)^{|\rho| - 1} (|\rho| - 1)! f(\rho), 
				\qquad \forall \pi \in \Pi_n
				\label{eq:mi-1-g}
			\end{equation}
		\end{enumerate}
	\end{lemma}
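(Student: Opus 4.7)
The plan is to recognize this as an instance of the general Möbius inversion theorem applied to the partition lattice $(\Pi_n, \le)$. Recall that for any locally finite poset, relations of the form $f(\pi) = \sum_{\rho \le \pi} g(\rho)$ are equivalent to $g(\pi) = \sum_{\rho \le \pi} \mu(\rho, \pi) f(\rho)$, where $\mu$ is the Möbius function determined by $\mu(\pi, \pi) = 1$ and the recursion $\mu(\rho, \pi) = -\sum_{\rho < \sigma \le \pi} \mu(\sigma, \pi)$. The real content of the lemma is therefore to identify the coefficient $(-1)^{|\rho|-1}(|\rho|-1)!$ with the Möbius function on $\Pi_n$.

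First I would quickly dispatch the abstract equivalence. Assuming (i), substitute the expression for $f(\rho)$ into the right-hand side of (ii), interchange the order of summation, and collapse the inner sum using the defining identity $\sum_{\sigma \le \rho \le \pi} \mu(\rho, \pi) = \delta_{\sigma, \pi}$; the converse direction follows by the symmetric manipulation with $f$ and $g$ interchanged. This step is essentially bookkeeping and reduces the lemma to a combinatorial computation of $\mu$ on the partition lattice.

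To carry out that computation, I would exploit the fact that every interval $[\rho, \pi] \subseteq \Pi_n$ is poset-isomorphic to a product of smaller partition lattices---one factor $\Pi_{\lambda_B}$ per block $B \in \pi$, where $\lambda_B$ is the number of sub-blocks of $\rho$ contained in $B$. Because the Möbius function is multiplicative under Cartesian products of posets, this reduces the task to evaluating $\mu(\hat{0}_m, \hat{1}_m)$ on each factor $\Pi_m$, where $\hat{0}_m$ and $\hat{1}_m$ denote the finest and coarsest partitions of an $m$-element set. A short induction on $m$, unwinding the defining recursion for $\mu$ and grouping partitions of $\{1,\dots,m\}$ by their block-size profile, should yield $\mu(\hat{0}_m, \hat{1}_m) = (-1)^{m-1}(m-1)!$, which matches the coefficient in the lemma. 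I expect this inductive computation of $\mu(\hat{0}_m, \hat{1}_m)$ to be the main obstacle, since it requires a careful enumeration of partitions by block structure; once it is in hand, the rest of the argument is mechanical application of the Möbius inversion framework.
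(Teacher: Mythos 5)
The paper never proves this lemma---it is quoted from the literature (the text points the reader to \cite{MA:97} and lecture notes) and is invoked only once, evaluated at the maximum partition $\hat 1_k$, inside the proof of Theorem~\ref{thm:cum-as-mom}. So your proposal supplies a proof where the paper offers only a citation, and the route you choose (general M\"obius inversion on a locally finite poset, plus identification of the M\"obius function of $\Pi_n$ via the factorization of intervals into products of smaller partition lattices) is the standard one; the bookkeeping step and the computation $\mu(\hat 0_m,\hat 1_m)=(-1)^{m-1}(m-1)!$ are fine in outline.

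There is, however, a genuine gap at the final identification, and it matters because the lemma as displayed is not literally correct for all $\pi$. Your own argument gives, for $\rho\le\pi$, the factorization $[\rho,\pi]\cong\prod_{B\in\pi}\Pi_{\lambda_B}$ and hence $\mu(\rho,\pi)=\prod_{B\in\pi}(-1)^{\lambda_B-1}(\lambda_B-1)!$, where $\lambda_B$ is the number of blocks of $\rho$ contained in $B$. This coincides with the displayed coefficient $(-1)^{|\rho|-1}(|\rho|-1)!$ only when $\pi=\hat 1_n$ (where $[\rho,\hat 1_n]\cong\Pi_{|\rho|}$); for $\pi$ with more than one block the two differ, e.g.\ $\mu(\pi,\pi)=1$ while the displayed coefficient is $(-1)^{|\pi|-1}(|\pi|-1)!$. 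In fact statement (ii) as written already fails in $\Pi_2$: take $g\equiv 1$, so (i) forces $f(\hat 0)=1$, yet (ii) evaluated at the two-block partition $\hat 0$ would demand $g(\hat 0)=-f(\hat 0)=-1$. So your assertion that the computed M\"obius function ``matches the coefficient in the lemma'' holds only at the top element, and an attempt to prove the statement exactly as displayed must fail. The repair is easy and consistent with how the result is actually used: either state (ii) with the full interval M\"obius function $\prod_{B\in\pi}(-1)^{\lambda_B-1}(\lambda_B-1)!$, or restrict the inversion formula to $\pi=\hat 1_n$, which is all that the proof of Theorem~\ref{thm:cum-as-mom} requires and is exactly what your computation of $\mu(\hat 0_m,\hat 1_m)$ delivers.
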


	Lemma \ref{lem:mobius-1} provides a handy formula to invert sums over refinements of 
	a given element of the partition lattice. While not immediately obvious, 
	Lemma \ref{lem:mobius-1} can be used to invert \eqref{eq:mom-as-cumulant}, leading to 
	the following result:
	\begin{theorem}[Cumulants from Raw Moments]
		Consider the raw moments $m$ and cumulants $\kappa$ of a random vector $X = 
		\begin{pmatrix} X_1 & X_2 & \cdots & X_n \end{pmatrix}$. For any multiset $[i_1, 
		i_2, \dots, i_k]$ from the indices $\{1, 2, \dots, n\}$, the corresponding 
		cumulant can be expressed in terms of the raw moments by
		\begin{equation}
			\kappa_{[i_1, i_2, \dots, i_k]} = \sum_{\pi \in \Pi_k} (-1)^{|\pi| - 1} 
			(|\pi| - 1)! \prod_{B \in \pi} m_{[i_j \mid j \in B]}
			\label{eq:cum-as-mom}
		\end{equation}
		\label{thm:cum-as-mom}
	\end{theorem}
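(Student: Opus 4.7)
My plan is to apply the M\"obius inversion of Lemma \ref{lem:mobius-1} to a partition-lattice identity that generalizes \eqref{eq:mom-as-cumulant}. Since the one-block partition $\{\{1,\ldots,k\}\}$ is the unique maximum element of $\Pi_k$, the existing formula \eqref{eq:mom-as-cumulant} is really the top-element case of a broader family of identities indexed by $\pi \in \Pi_k$, and it is this broader family that fits the hypothesis of Lemma \ref{lem:mobius-1}.

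Concretely, I would fix the multiset $[i_1, \ldots, i_k]$ and define two functions on $\Pi_k$ by
\[
f(\pi) = \prod_{B \in \pi} m_{[i_j \mid j \in B]}, \qquad g(\pi) = \prod_{B \in \pi} \kappa_{[i_j \mid j \in B]},
\]
so that $g$ evaluated at the one-block partition is exactly the target cumulant $\kappa_{[i_1, \ldots, i_k]}$. The crux of the proof is to show that $f(\pi) = \sum_{\rho \le \pi} g(\rho)$ for every $\pi \in \Pi_k$. Granted this, applying Lemma \ref{lem:mobius-1} at the one-block partition, and observing that every $\rho \in \Pi_k$ refines it, immediately produces \eqref{eq:cum-as-mom}.

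The main obstacle is the $f$--$g$ identity, and my approach rests on the block structure of refinements. For any $\pi = \{B_1, \ldots, B_s\}$, a partition $\rho \le \pi$ corresponds bijectively to a tuple $(\sigma_1, \ldots, \sigma_s)$ where each $\sigma_\ell$ is a partition of the block $B_\ell$. This lets me factor $\sum_{\rho \le \pi} g(\rho)$ as a product over $\ell$ of inner sums, and then apply \eqref{eq:mom-as-cumulant} inside each factor to collapse the inner sum into the block moment $m_{[i_j \mid j \in B_\ell]}$. Taking the product over $\ell$ recovers $f(\pi)$, and with that identity in hand, M\"obius inversion closes out the argument.
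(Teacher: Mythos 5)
Your proposal is correct and follows essentially the same route as the paper: the same functions $f$ and $g$ on $\Pi_k$, the same block-factorization argument (refinements of $\pi$ correspond to tuples of partitions of its blocks, with \eqref{eq:mom-as-cumulant} applied per block) to establish $f(\pi) = \sum_{\rho \le \pi} g(\rho)$ for all $\pi$, and then Lemma \ref{lem:mobius-1} evaluated at the one-block partition. No substantive differences to report.
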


	\begin{proof}
		Define two maps $f_k, g_k: \Pi_k \to \mathbb R$ by
		\[
			g_k(\pi) = \prod_{B \in \pi} \kappa_{[i_j \mid j \in B]} 
		\]
		and
		\[
			f_k(\pi) = \prod_{B \in \pi} m_{[i_j \mid j \in B]} 
		\]
		Let $\hat 1_k = \{ \{1, 2, \dots, k\}\}$ be the maximum partition in $\Pi_k$. 
		Then \eqref{eq:mom-as-cumulant} can be re-written $f_k(\hat 1_k) = \sum_{\rho \le 
		\hat 1_k} g_k(\rho)$, since the set $\Pi_k$ is precisely the set of refinements 
		of $\hat 1_k$. In fact, this is enough to conclude that $f_k(\pi) = \sum_{\rho 
		\le \pi} g_k(\rho)$ for all $\pi \in \Pi_k$. This is because we can write
		\[
			f_k(\pi) = \prod_{B \in \pi} f_B(\hat 1_B)
		\]
		where $\hat 1_B$ is the maximum element of the lattice of partitions of $B$, and 
		$f_B$ is defined similar to $f_k$, but on the elements of $B$ instead of $\{1, 
		2, \dots, k\}$. Invoking \eqref{eq:mom-as-cumulant}, we have that $f_B(\hat 1_B) 
		= \sum_{\rho_B \le \hat 1_B} g_B(\rho_B)$ (where $g_B$ is defined similar to 
		$g_k$), so that
		\begin{align*}
			f_k(\pi) = \prod_{B \in \pi} \sum_{\rho_B \le \hat 1_B} g_B(\rho_B)
			&= \sum_{\rho_{B_1} \le \hat 1_{B_1}} \sum_{\rho_{B_2} \le \hat 1_{B_2}}
			\cdots \sum_{\rho_{B_{|\pi|}} \le \hat 1_{B_{|\pi|}}} \prod_{B \in \pi} 
			g_B(\rho_B) \\
			&= \sum_{\rho \le \pi} g_k(\rho)
		\end{align*}
		since the set of all tuples of refinements $(\rho_{B_1}, \rho_{B_2}, \dots, 
		\rho_{B_{|\pi|}})$ is isomorphic to the product of lattices $\Pi_{|B_1|} 
		\Pi_{|B_2|} \cdots \Pi_{|B_{|\pi|}}$, which is itself isomorphic to the set of 
		all refinements of $\pi$. Thus $f_k$ and $g_k$ satisfy \eqref{eq:mi-1-f}, so we 
		invoke the M\"obius inversion in Lemma \ref{lem:mobius-1} to obtain 
		\eqref{eq:mi-1-g}. In particular, evaluating \eqref{eq:mi-1-g} on $\hat 1_k$, we 
		obtain
		\begin{align*}
			g_k(\hat 1_k) &= \sum_{\rho \le \hat 1_k} (-1)^{|\rho| - 1} (|\rho| - 1)! 
			f_k(\rho) \\
			&= \sum_{\rho \in \Pi_k} (-1)^{|\rho| - 1} (|\rho| - 1)! \prod_{B \in \pi} 
			m_{[i_j \mid j \in B]}
		\end{align*}
		But $g_k(\hat 1_k) = \kappa_{[i_1, i_2, \dots, i_k]}$, so we obtain 
		\eqref{eq:cum-as-mom}.
	\end{proof}

	\section{Multivariate $k$-Statistics} \label{sect:kstat}
	
	Now that we have examined multivariate cumulants, our next challenge is to estimate 
	them from a sample. Once again, suppose that we have a random vector $X = 
	\begin{pmatrix} X_1 & X_2 & \cdots & X_n \end{pmatrix}^\top$, distributed according 
	to some joint distribution $F$. Further suppose that, instead of knowing $F$, all we 
	have is an i.i.d. sample $x_1, x_2, \dots, x_N$ from this distribution, 
	where $x_t \in \mathbb R^n$. We would like to estimate the cumulants of $X$ using 
	some statistic, i.e., some function of the data $x_1, x_2, \dots, x_N$.
	
	It turns out that we can obtain an unbiased estimate of each cumulant using 
	\textit{raw sample moments}. For each multiset $[i_1, i_2, \dots, i_k]$ of the 
	indices $\{1, 2, \dots, n\}$, the corresponding raw sample moment is the 
	statistic given by
	\begin{equation}
		\hat m_{[i_1, i_2, \dots, i_k]} = \frac{1}{N} \sum_{t = 1}^N x_{t, i_1} x_{t, 
		i_2} 
		\cdots x_{t, i_k}
		\label{eq:samp-moment}
	\end{equation}
	Because the observations in the sample are independent, $\hat m_{[i_1, i_2, \dots, 
	i_k]}$ is an unbiased estimator for the raw moment $m_{[i_1, i_2, \dots, i_k]}$. 
	Furthermore, we can use raw sample moments to obtain an unbiased estimates of 
	cumulants:
	\begin{definition}[$k$-Statistic] \label{def:kstat}
		Consider the random vector $X = \begin{pmatrix} X_1 & X_2 & \cdots & X_n 
		\end{pmatrix}^\top$ and some multiset $[i_1, i_2, \dots, i_k]$ from the indices 
		$\{1, 2, \dots, n\}$. Given a sample of $X$ with at least $N \ge k$ observations, 
		the corresponding \textit{$k$-statistic} is given by
		\begin{equation}
			k_{[i_1, i_2, \dots, i_k]} = \sum_{\pi \in \Pi_k} (-1)^{|\pi| - 1} c_\pi 
			\prod_{B \in \pi} \hat m_{[i_j \mid j \in B]}
			\label{eq:k-stat}
		\end{equation}
		where we define a positive coefficient for each partition in $\Pi_k$ by
		\begin{equation}
			c_\pi = N^{|\pi|} \sum_{b_1 = 1}^{|B_1|} \sum_{b_2 = 1}^{|B_2|} \cdots 
			\sum_{b_{|\pi|} = 1}^{|B_{|\pi|}|}  
				\frac{\left( \sum_{j=1}^{|\pi|} b_j - 
				1\right)!}{(N)_{\sum_{j=1}^{|\pi|} b_j}}
			\left(
			\prod_{j = 1}^{|\pi|} \stirling{|B_j|}{b_j} (b_j - 1)!
			\right) , \qquad \forall \pi \in \Pi_k
			\label{eq:coef}
		\end{equation} 
		and $B_1, B_2, \dots, B_{|\pi|}$ are the blocks of the partition $\pi$.
		\oprocend
	\end{definition} 

	Equation \eqref{eq:k-stat} is a linear combination of products of raw sample moments 
	(which are computed from the data). The linear combination itself involves a sum over 
	the partition lattice with coefficients of alternating sign, which hints at M\"obius 
	inversion. Indeed, we can derive the $k$-statistic by using M\"obius inversion to 
	correct the bias of products of raw sample moments. We provide a much more detailed 
	derivation in the next section, when we prove that $k$-statistics are unbiased:

	\begin{theorem}[$k$-Statistics are Unbiased Estimators of Cumulants]
		\label{thm:k}
		Consider the random vector $X = \begin{pmatrix} X_1 & X_2 & \cdots & X_n 
		\end{pmatrix}^\top$, and let $[i_1, i_2, \dots, i_k]$ be any multiset of the 
		indices $\{1, 2, \dots, n\}$. The $k$-statistic computed from any i.i.d. sample 
		of $X$ with at least $k$ observations is an unbiased estimator of the cumulant, 
		i.e.,  
		\[
			\E\left[
				k_{[i_1, i_2, \dots, i_k]}
			\right] = \kappa_{[i_1, i_2, \dots, i_k]}
		\] 
	\end{theorem}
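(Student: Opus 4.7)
The plan is to derive the $k$-statistic through two applications of M\"obius inversion on the partition lattice: one to build an unbiased estimator of $\kappa_{[i_1,\ldots,i_k]}$ from ``symmetric sums'' over distinct sample indices, and a second to rewrite that estimator in terms of products of raw sample moments as in \eqref{eq:k-stat}. To set this up, for each partition $\rho \in \Pi_k$ with blocks $B_1,\ldots,B_{|\rho|}$ I would introduce the sum
\[
  P_\rho \;=\; \sum_{\substack{t_1,\ldots,t_{|\rho|} \in \{1,\ldots,N\} \\ \text{distinct}}} \;\prod_{B \in \rho} \prod_{j \in B} x_{t_B, i_j},
\]
and observe that, since distinct samples are independent and identically distributed, $\hat f_k(\rho) := P_\rho / (N)_{|\rho|}$ is an unbiased estimator of the map $f_k(\rho) = \prod_{B \in \rho} m_{[i_j \mid j \in B]}$ that appears in the proof of Theorem~\ref{thm:cum-as-mom}.

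By linearity of expectation together with the argument used in Theorem~\ref{thm:cum-as-mom}, the estimator
\[
  \hat g_k(\hat 1_k) \;:=\; \sum_{\rho \in \Pi_k} (-1)^{|\rho|-1}\,(|\rho|-1)!\; \hat f_k(\rho)
\]
is then unbiased for $g_k(\hat 1_k)=\kappa_{[i_1,\ldots,i_k]}$. The remainder of the proof is to show that $\hat g_k(\hat 1_k)$ coincides with the $k$-statistic from Definition~\ref{def:kstat}. To translate between the $P_\rho$'s and products of raw sample moments, I would group the ``all tuples'' sum by the equivalence relation ``two positions carry the same $t$-value,'' producing the identity
\[
  N^{|\rho|} \prod_{B \in \rho} \hat m_{[i_j \mid j \in B]} \;=\; \sum_{\tau \ge \rho} P_\tau,
\]
where $\tau$ ranges over partitions of $\{1,\ldots,k\}$ that coarsen $\rho$. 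Inverting this upper relation by M\"obius inversion on $\Pi_k$, and using that for $\rho \le \tau$ the interval $[\rho,\tau]$ factors so that $\mu(\rho,\tau) = \prod_{C \in \tau}(-1)^{\lambda_C - 1}(\lambda_C - 1)!$ with $\lambda_C$ the number of blocks of $\rho$ lying inside $C$, expresses each $P_\rho$ as an alternating sum of the products $N^{|\tau|}\prod_{C\in\tau}\hat m_{[i_j \mid j \in C]}$.

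Substituting this into $\hat g_k(\hat 1_k)$ and interchanging the two sums reorganizes the estimator by the outer partition $\tau$, so that the coefficient of $\prod_{C \in \tau}\hat m_C$ becomes a double sum over independent sub-partitions $\rho_C$ of each block $C \in \tau$. Grouping the inner sum by the sizes $b_C = |\rho_C|$ and invoking the fact that $\stirling{|C|}{b_C}$ counts the partitions of $C$ with $b_C$ blocks produces precisely the coefficient $(-1)^{|\tau|-1} c_\tau$ from \eqref{eq:coef}, completing the identification with \eqref{eq:k-stat}. The main obstacle is this final bookkeeping step: one must carefully track the signs through both M\"obius inversions and then recognize that the nested sum over sub-partitions factors block-by-block into the Stirling-weighted product of \eqref{eq:coef}. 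Everything else is linearity of expectation plus the same lattice-factorization argument already used in the proof of Theorem~\ref{thm:cum-as-mom}.
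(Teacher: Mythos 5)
Your proposal is correct and follows essentially the paper's own route: your distinct-index statistics $P_\rho/(N)_{|\rho|}$ are exactly the unbiased estimators $\hat m_\rho$ of Lemma \ref{lem:unbiased}, your grouping identity $N^{|\rho|}\prod_{B\in\rho}\hat m_{[i_j\mid j\in B]}=\sum_{\tau\ge\rho}P_\tau$ is the statistic-level version of Lemma \ref{lem:bias}, and the M\"obius inversion over coarsenings followed by the block-by-block Stirling bookkeeping is the same computation the paper uses to reduce $Q$ to \eqref{eq:k-stat}. The only (harmless) difference is that you invert a deterministic identity between statistics and take expectations once at the outset, whereas the paper inverts the bias relation at the level of expectations; the estimator and the coefficient calculation are identical.
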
  
	First, we will consider some lower-order examples of multivariate $k$-statistics.
	
	\begin{example}[First-Order $k$-Statistics]
		The partition lattice $\Pi_1$ consists of only one element $\{ \{1\}\}$, so 
		first-order $k$-statistics are easily computed as
		\[
			k_{[i]} = c_{\{\{1\}\}} \hat m_{[i]} = \hat m_{[i]}
		\]
		Thus, as expected from Example \ref{ex:cum-first}, first-order $k$-statistics are 
		merely sample means. \oprocend
	\end{example}

	\begin{example}[Second-Order $k$-Statistics]
		The partition lattice $\Pi_2$ consists of two elements: $\{ \{1, 2\} \}$, and $\{ 
		\{1\}, \{2\}\}$. The corresponding coefficients are
		\[
			c_{12} = N^1 \left( \frac{(0)!}{(N)_1} \stirling{2}{1}(0)! +  
			\frac{(1)!}{(N)_2} \stirling{2}{2} (1)! \right) 
			= 1 + \frac{1}{N - 1}
			= \frac{N}{N - 1}
		\]
		and
		\[
			c_{1|2} = N^2 \frac{(1)!}{(N)_2} \stirling{1}{1}\stirling{1}{1}(0)!(0)!
			= \frac{N^2}{N(N - 1)} = \frac{N}{N - 1}
		\]
		Therefore
		\[
			k_{[i_1 i_2]} = \frac{N}{N - 1} \left( \hat m_{[i_1, i_2]} +  \hat 
			m_{[i_1]} \hat m_{[i_2]} \right)
		\]
		which we recognize as the classical unbiased estimator for covariance. Of course, 
		this is exactly what we should expect after Example \ref{ex:cum-second}. \oprocend
	\end{example}

	\begin{example}[Third-Order $k$-Statistics]
		The partition lattice $\Pi_3$ has five elements: $\{\{1, 2, 3\}\}$,\; $\{ \{1\}, 
		\{2, 3\}\}$,\; $\{ \{2\}, \{1, 3\}\}$,\; $\{ \{3\}, \{1, 2\} \}$,\; and $\{\{1\}, 
		\{2\}, \{3\}\}$. We first compute the respective coefficients $c_{123}$, 
		$c_{1|23}$, $c_{2|13}$, $c_{3|12}$, and $c_{1|2|3}$:
		\begin{align*}
			c_{123} &= N^1 \left(
				\frac{(0)!}{(N)_1} \stirling{3}{1} (0)! + \frac{(1)!}{(N)_2} 
				\stirling{3}{2} (1)! + \frac{(2)!}{(N)_3} \stirling{3}{3} (2)!
			\right)
			= 1 + \frac{3}{N - 1} + \frac{4}{(N - 1)(N - 2)} \\
			c_{1|23} &= N^2 \left(
				\frac{(1)!}{(N)_2} \stirling{1}{1} \stirling{2}{1} (0)! (0)! +
				\frac{(2)!}{(N)_3} \stirling{1}{1} \stirling{2}{2} (0)! (1)!
			\right)
			= \frac{N}{N - 1} + \frac{2 N}{(N - 1)(N-2)} \\
			c_{1|2|3} &= N^3 \frac{(2)!}{(N)_3} \stirling{1}{1} \stirling{1}{1} 
			\stirling{1}{1} (0)! (0)! (0)! = \frac{2N^2}{(N - 1)(N - 2)}
		\end{align*}
		Note that $c_\pi$ depends only on the number and size of each block, and not the 
		blocks themselves, so $c_{1|23} = c_{2|13} = c_{3|12}$. Substituting these 
		coefficients into \eqref{eq:k-stat} and simplifying, we obtain
		\begin{align*}
			k_{[i_1, i_2, i_3]} &= c_{123} \hat m_{[i_1, i_2, i_3]} + c_{1|23} \left( 
			\hat m_{[i_1]} \hat m_{[i_2, i_3]} + \hat m_{[i_2]} \hat m_{[i_1, i_3]} + 
			\hat m_{[i_3]} \hat m_{[i_1, i_2]} \right) + c_{1|2|3} \hat m_{[i_1]} \hat 
			m_{[i_2]} \hat m_{[i_3]} \\
			&= \frac{N^2}{(N-1)(N-2)} \left(
				\hat m_{[i_1, i_2, i_3]}  -  
				\hat m_{[i_1]} \hat m_{[i_2, i_3]} - \hat m_{[i_2]} \hat m_{[i_1, 
					i_3]} - \hat m_{[i_3]} \hat m_{[i_1, i_2]}
				+ 2 \hat m_{[i_1]} \hat m_{[i_2]} \hat m_{[i_3]}
			\right) 
		\end{align*}
		In particular, if $X_{i_1} = X_{i_2} = X_{i_3}$, we obtain the third univariate 
		$k$-statistic
		\[
			k_{[i, i, i]} = \frac{N^2}{(N - 1)(N - 2)} \left(
				\hat m_{[i, i, i]} - 3 \hat m_{[i]} \hat m_{[i, i]} + \hat m_{[i]}^3
			\right)
		\]
		\oprocend
	\end{example}
	
	\subsection{Proof of Theorem \ref{thm:k}}
	\label{sect:proof}
	
	The general outline of the proof is as follows. We will first note that, while raw 
	sample moments are unbiased estimators of raw moments, it is still the case that 
	\textit{products} of raw sample moments provide biased estimates for products of raw 
	moments. The first step will be to quantify this bias. Second, we will once again use 
	M\"obius inversion over the partition lattice to obtain an \textit{unbiased} 
	estimator of products of raw moments, in terms of products of raw sample moments. 
	Finally, we will substitute this estimator into \eqref{eq:cum-as-mom} and simplify.
	
	We begin by examining the expected value of raw sample moments: 
	
	\begin{lemma}[Bias of Products of Raw Sample Moments] \label{lem:bias}
		Consider the random vector $X = \begin{pmatrix} X_1 & X_2 & \cdots & X_n 
		\end{pmatrix}^\top$, and consider some multiset $[i_1, i_2, \dots, i_k]$ from the 
		indices $\{1, 2, \dots, n\}$. For every $\pi \in \Pi_k$, we have
		\begin{equation}
			\E\left[ \prod_{B \in \pi} \hat m_{[i_j \mid j \in B]} \right]
			= \frac{1}{N^{|\pi|}}\sum_{\rho \ge \pi} (N)_{|\rho|} \prod_{C \in \rho} 
			m_{[i_j \mid j \in C]}
			\label{eq:bias}
		\end{equation}
	\end{lemma}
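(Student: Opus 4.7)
The plan is to expand the sample moments as sums, exchange the order of summation with expectation, and then group terms according to the combinatorial pattern of coinciding sample indices. Concretely, by definition of the raw sample moments in \eqref{eq:samp-moment},
\[
  \prod_{B \in \pi} \hat m_{[i_j \mid j \in B]}
  = \prod_{B \in \pi} \frac{1}{N} \sum_{t_B = 1}^N \prod_{j \in B} x_{t_B, i_j}
  = \frac{1}{N^{|\pi|}} \sum_{(t_B)_{B \in \pi} \in \{1, \dots, N\}^\pi}
  \prod_{B \in \pi} \prod_{j \in B} x_{t_B, i_j},
\]
so I first would take the expectation inside the finite sum and work with a single generic assignment $(t_B)_{B \in \pi}$.

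Next, I would observe that any such assignment induces an equivalence relation on the blocks of $\pi$, declaring $B \sim B'$ whenever $t_B = t_{B'}$. Merging the blocks in each equivalence class produces a partition of $\{1, 2, \dots, k\}$ that is coarser than $\pi$, i.e., some $\rho \ge \pi$. For a fixed assignment inducing $\rho$, independence of the sample $x_1, \dots, x_N$ combined with the definition of raw moments yields
\[
  \E\left[ \prod_{B \in \pi} \prod_{j \in B} x_{t_B, i_j} \right]
  = \prod_{C \in \rho} \E\left[ \prod_{j \in C} x_{t_C, i_j} \right]
  = \prod_{C \in \rho} m_{[i_j \mid j \in C]},
\]
where $t_C$ denotes the common value of $t_B$ over all $B \subseteq C$. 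The key point is that the right-hand side depends only on $\rho$, not on the specific assignment.

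The final step is a counting argument: the number of assignments $(t_B)_{B \in \pi} \in \{1, \dots, N\}^\pi$ whose induced partition is exactly $\rho$ equals the number of injections from the $|\rho|$ classes into $\{1, \dots, N\}$, which is the falling factorial $(N)_{|\rho|}$. Summing over all $\rho \ge \pi$ and dividing by $N^{|\pi|}$ yields the claimed identity \eqref{eq:bias}. I expect the main subtlety to be bookkeeping around the bijection between index assignments and coarser partitions $\rho \ge \pi$, and in particular making sure that the counting produces $(N)_{|\rho|}$ rather than, say, $N^{|\rho|}$ or $\binom{N}{|\rho|}$; once that correspondence is nailed down, the rest reduces to independence and the definition of raw moments.
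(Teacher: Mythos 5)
Your proposal is correct and follows essentially the same route as the paper: expand the product of sample moments over block-indexed time assignments, group assignments by the induced coarser partition $\rho \ge \pi$, use independence to factor the expectation into $\prod_{C \in \rho} m_{[i_j \mid j \in C]}$, and count $(N)_{|\rho|}$ assignments per class. The only cosmetic difference is that you identify the equivalence classes directly with coarser partitions $\rho \ge \pi$, whereas the paper first works with partitions $\sigma$ of the block set $\{1,\dots,|\pi|\}$ and then invokes the bijection with $\{\rho : \rho \ge \pi\}$.
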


	\begin{proof}
		Our first step is to switch the order of sums and products, as follows:
		\begin{align*}
			\E\left[ \prod_{B \in \pi} \hat m_{[i_j \mid j \in B]} \right]
			&= \frac{1}{N^{|\pi|}}\E\left[
				\prod_{B \in \pi} \sum_{t=1}^N \prod_{j \in B} x_{t,i_j}
			\right] \\
			&= \frac{1}{N^{|\pi|}} \sum_{t_1 = 1}^N \sum_{t_2 = 1}^N \cdots 
			\sum_{t_{|\pi|} = 1}^N \E \left[
				\prod_{j = 1}^{|\pi|} \prod_{\ell \in B_j} x_{t_j, i_\ell}
			\right]
		\end{align*}
		Note that the inner expectation depends on which of the observations 
		$t_1, t_2, \dots, t_{|\pi|}$ are identical, since observations at distinct times 
		are independent, allowing us to factor the expected value. With this in mind, we 
		will partition the hypercube of $t$-indices that we are summing over 
		into equivalence classes, based on partitions of the set $\{1, 2, \dots, 
		|\pi|\}$. Given such a partition $\sigma = \{C_1, C_2, \dots, C_{|\sigma|}\}$, we 
		define the equivalence class $[\sigma]$ as the set of index tuples $(t_1, t_2, 
		\dots, t_{|\pi|})$ with the following property: for all $i, j \in \{1, 2, \dots, 
		|\pi|\}$, we have that $t_i = t_j$ if and only if $t_i, t_j \in C$ for some block 
		$C \in \sigma$. In other words, the blocks of $\sigma$ represent elements of the 
		$t$-index that are identical. Clearly each of the $N^{|\pi|}$ index tuples in the 
		sum belong to some equivalence class $[\sigma]$. Furthermore, each equivalence 
		class contains $(N)_{|\sigma|}$ elements: $N$ possible values for indices in the 
		first block, $N - 1$ possible values in the second block, and so on.
		
		For all $(t_1, t_2, \dots, t_{|\pi|}) \in [\sigma]$, we have the following 
		property:
		\[
			\E \left[
			\prod_{j = 1}^{|\pi|} \prod_{\ell \in B_j} x_{t_j, i_\ell}
			\right]
			= \prod_{C \in \sigma} \E\left[ \prod_{j \in C} \prod_{\ell \in B_j} 
			X_{i_\ell} 
			\right]
			= \prod_{C \in \sigma} \E \left[
				\prod_{\ell \in \bigcup_{j \in C} B_j} X_{i_\ell}
			\right]
			= \prod_{C \in \sigma} m_{[i_\ell \mid \ell \in \bigcup_{j \in C} B_j]}
		\]
		This follows because the expected value factors along the blocks of $\sigma$, 
		since the blocks have pairwise-distinct times, and thus the observations in each 
		block are pairwise independent. Then we can write
		\[
			\E\left[ \prod_{B \in \pi} \hat m_{[i_j \mid j \in B]} \right]
			 = \frac{1}{N^{|\pi|}} \sum_{\sigma 
			\in \Pi_{|\pi|}} (N)_{|\sigma|} \prod_{C \in 
				\sigma} m_{[i_\ell \mid \ell \in \bigcup_{j \in C} B_j]}
		\] 
		The final step is to note that there is a bijection between partitions of $\{1, 
		2, \dots |\pi|\}$ and partitions of $\{1, 2, \dots, k\}$ that are coarser than 
		$\pi$. This bijection is easy to see: for each block $C \in \sigma$, replace all 
		of the blocks in $\pi$ with coarser blocks $\bigcup_{j \in C} B_j$, resulting in 
		a coarser partition $\rho \ge \pi$. Due to this bijection, we can re-write the 
		sum over $\sigma$ as a sum over coarser partitions $\rho \ge \pi$, obtaining 
		\eqref{eq:bias}.
	\end{proof} 

	Equation \eqref{eq:bias} has a somewhat familiar form---a sum over partitions that 
	are coarser than $\pi$. Recall from the proof of Theorem \ref{thm:cum-as-mom} that we 
	used M\"obius inversion to invert a sum over partitions that \textit{refine} $\pi$. 
	While the direction of the sum makes a difference---we cannot use Lemma 
	\ref{lem:mobius-1} in this case---the partition lattice still admits a M\"obius 
	inversion formula to invert \eqref{eq:bias}. Once again, we will state the needed 
	formula here, and direct the interested reader to a combinatorics text like 
	\cite{MA:97}:

	\begin{lemma}[M\"obius Inversion on the Partition Lattice: Part II] 
	\label{lem:mobius-2}
		Consider two functions $f, g: \Pi_n \to \mathbb R$. For two partitions $\pi \le 
		\rho \in \Pi_n$, let $\Sigma(\pi, \rho)$ denote the set of $|\rho|$ partitions 
		that, when applied to the blocks of $\rho$, yield the refinement $\pi$. Then the 
		following are equivalent:
		\begin{enumerate}
			\item
			\begin{equation}
				f(\pi) = \sum_{\rho \ge \pi} g(\rho), \qquad \forall \pi \in \Pi_n
				\label{eq:mobius-2-f}
			\end{equation}
			\item
			\begin{equation}
				g(\pi) = \sum_{\rho \ge \pi} \left( (-1)^{|\pi| - |\rho|} 
					\prod_{\sigma \in \Sigma(\pi, \rho)} (|\sigma| - 1)!
				\right) f(\rho), \qquad \forall \pi \in \Pi_n
				\label{eq:mobius-2-g}
			\end{equation}
		\end{enumerate}
	\end{lemma}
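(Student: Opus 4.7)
The lemma asserts the standard Möbius inversion formula on the finite poset $\Pi_n$, in the ``coarsening'' direction. I would take as known the general Möbius inversion theorem on any finite poset: (1) and (2) are equivalent precisely when the coefficient $(-1)^{|\pi|-|\rho|}\prod_{\sigma \in \Sigma(\pi,\rho)}(|\sigma|-1)!$ is replaced by the Möbius function $\mu_{\Pi_n}(\pi,\rho)$ of the partition lattice. So the entire content of the lemma is the identity
\[
\mu_{\Pi_n}(\pi,\rho) \;=\; (-1)^{|\pi|-|\rho|}\prod_{\sigma \in \Sigma(\pi,\rho)}(|\sigma|-1)!, \qquad \pi \le \rho,
\]
and I would prove this in two stages.

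First, I would decompose the interval $[\pi,\rho]$ as a product of smaller partition lattices. Label the blocks of $\rho$ as $R_1,\ldots,R_r$ and let $\sigma_j \in \Sigma(\pi,\rho)$ be the partition of $R_j$ induced by $\pi$, with $\lambda_j = |\sigma_j|$. A partition $\tau$ with $\pi \le \tau \le \rho$ is determined uniquely by how it groups the $\lambda_j$ parts of $\sigma_j$ inside each $R_j$, which yields an order isomorphism $[\pi,\rho] \cong \Pi_{\lambda_1}\times\cdots\times\Pi_{\lambda_r}$. Since the Möbius function is multiplicative over products of posets (a direct consequence of its characterization as the inverse of the zeta function in the incidence algebra),
\[
\mu_{\Pi_n}(\pi,\rho) \;=\; \prod_{j=1}^{r}\mu_{\Pi_{\lambda_j}}(\hat 0_{\lambda_j},\hat 1_{\lambda_j}).
\]
Using $\sum_j(\lambda_j-1) = |\pi|-|\rho|$, the claim now reduces to proving $\mu_{\Pi_m}(\hat 0_m,\hat 1_m) = (-1)^{m-1}(m-1)!$ for every $m \ge 1$.

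Second, I would prove this reduced identity by induction on $m$. The base $m = 1$ is trivial. For $m > 1$, the defining recursion $\sum_{\sigma \in \Pi_m}\mu(\sigma,\hat 1_m) = 0$, combined with the interval isomorphism $[\sigma,\hat 1_m] \cong \Pi_{|\sigma|}$ and the strong inductive hypothesis applied to each $\sigma \ne \hat 0_m$, gives
\[
\mu_{\Pi_m}(\hat 0_m,\hat 1_m) \;=\; -\sum_{k=1}^{m-1}\stirling{m}{k}(-1)^{k-1}(k-1)!
\]
after grouping $\sigma \ne \hat 0_m$ by $|\sigma| = k$. The inductive step therefore closes iff $\sum_{k=1}^{m}\stirling{m}{k}(-1)^{k-1}(k-1)! = 0$ for all $m > 1$.

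The main obstacle is this final Stirling-number identity; everything preceding it is boilerplate partition-lattice combinatorics. The cleanest verification uses exponential generating functions: invoking $\sum_{m \ge k}\stirling{m}{k}\frac{x^m}{m!} = \frac{(e^x-1)^k}{k!}$ and swapping order of summation, the EGF of $c_m := \sum_k \stirling{m}{k}(-1)^{k-1}(k-1)!$ equals $\sum_{k \ge 1}\frac{(-1)^{k-1}}{k}(e^x-1)^k = \log(1 + (e^x-1)) = x$, so $c_1 = 1$ and $c_m = 0$ for all $m > 1$, completing the induction and hence the lemma.
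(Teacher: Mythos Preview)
Your proof is correct and complete. The paper, however, does not prove this lemma at all: it states the formula and ``direct[s] the interested reader to a combinatorics text like \cite{MA:97}.'' So there is no paper-proof to compare against; you have supplied what the paper deliberately omits.

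For what it is worth, the argument you give is exactly the standard one: the interval decomposition $[\pi,\rho]\cong\prod_j\Pi_{\lambda_j}$, multiplicativity of the M\"obius function over products, and the evaluation $\mu_{\Pi_m}(\hat 0_m,\hat 1_m)=(-1)^{m-1}(m-1)!$. Your inductive verification of the last fact via the Stirling identity $\sum_{k=1}^{m}\stirling{m}{k}(-1)^{k-1}(k-1)!=0$ for $m>1$, closed by the EGF computation $\sum_{k\ge1}\tfrac{(-1)^{k-1}}{k}(e^x-1)^k=\log(e^x)=x$, is clean and self-contained. The only trade-off is length: the paper's choice to cite Aigner keeps the exposition focused on $k$-statistics, while your route buys a fully self-contained treatment at the cost of a paragraph of standard incidence-algebra machinery.
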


	The set $\Sigma(\pi, \rho)$ may cause some confusion, so it is worth considering an 
	example before we proceed. Consider two partitions of $\Pi_5$: $\pi = \{ \{1\}, 
	\{2\}, \{4\}, \{3, 5\} \}$, and $\rho = \{ \{1, 2, 3, 5\}, \{4\} \}$. Clearly $\pi$ 
	is a refinement of $\rho$. Furthermore, we can obtain $\pi$ from $\rho$ by 
	partitioning each block of $\rho$. Let $\sigma_1$ represent the partition of $\{1, 2, 
	3, 5\}$ into $\{ \{1\}, \{2\}, \{3, 5\}\}$, and let $\sigma_2$ be the 
	partition of $\{4\}$ into $\{ \{4\} \}$. The set $\Sigma(\pi, \rho) = \{ \sigma_1, 
	\sigma_2 \}$ is the collection of these two partitions. 
	
	Next, we apply this new M\"obius inversion formula to invert \eqref{eq:bias}, 
	obtaining an unbiased estimator for products of sample moments:

	\begin{lemma}[Unbiased Estimation of Products of Sample Moments] \label{lem:unbiased}
		Consider the random vector $X = \begin{pmatrix} X_1 & X_2 & \cdots & X_n 
		\end{pmatrix}^\top$, and consider some multiset $[i_1, i_2, \dots, i_k]$ from the 
		indices $\{1, 2, \dots, n\}$. For every $\pi \in \Pi_k$, define a statistic
		\begin{equation}
			\hat m_\pi = \frac{1}{(N)_{|\pi|}} \sum_{\rho \ge \pi} \left(
				(-1)^{|\pi| - |\rho|}N^{|\rho|} \prod_{\sigma \in \Sigma(\rho, 
				\pi)} (|\sigma| - 1)! \right) \prod_{C \in \rho} \hat m_{[i_j \mid 
				j \in C]}
			\label{eq:mhat}
		\end{equation}
		where $B_1, B_2, \dots, B_{|\pi|}$ are the blocks of $\pi$. Then $\hat m_\pi$ is 
		an unbiased estimator of the product of raw moments $\prod_{B \in \pi} m_{[i_j 
		\mid j \in B]}$, i.e., 
		\begin{equation}
			\E[\hat m_\pi] = \prod_{B \in \pi} m_{[i_j \mid j \in B]}
			\label{eq:unbiased}
		\end{equation}
	\end{lemma}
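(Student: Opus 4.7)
The plan is to apply the second Möbius inversion formula (Lemma \ref{lem:mobius-2}) directly to the bias formula \eqref{eq:bias}, choosing $f$ and $g$ so that the factor $(N)_{|\rho|}$ on the right of \eqref{eq:bias} is absorbed into $g$, and the factor $N^{|\pi|}$ on the left is absorbed into $f$. Concretely, for a fixed multiset $[i_1,\dots,i_k]$, define $f,g : \Pi_k \to \mathbb R$ by
\[
f(\pi) \;=\; N^{|\pi|}\, \E\!\left[\prod_{B \in \pi} \hat m_{[i_j \mid j \in B]}\right],
\qquad
g(\pi) \;=\; (N)_{|\pi|} \prod_{B \in \pi} m_{[i_j \mid j \in B]}.
\]
Then Lemma \ref{lem:bias}, after multiplying both sides by $N^{|\pi|}$, says exactly that $f(\pi) = \sum_{\rho \ge \pi} g(\rho)$ for every $\pi \in \Pi_k$, which is the hypothesis \eqref{eq:mobius-2-f} of Lemma \ref{lem:mobius-2}.

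Next, I would invoke the conclusion \eqref{eq:mobius-2-g} of Lemma \ref{lem:mobius-2}, which gives
\[
(N)_{|\pi|} \prod_{B \in \pi} m_{[i_j \mid j \in B]} \;=\; \sum_{\rho \ge \pi} \left( (-1)^{|\pi| - |\rho|} \prod_{\sigma \in \Sigma(\pi, \rho)} (|\sigma| - 1)! \right) N^{|\rho|}\, \E\!\left[\prod_{C \in \rho} \hat m_{[i_j \mid j \in C]}\right].
\]
Dividing both sides by $(N)_{|\pi|}$ and pulling the expectation outside the (finite) sum using linearity, the right-hand side becomes $\E[\hat m_\pi]$ exactly as in \eqref{eq:mhat}, which yields \eqref{eq:unbiased}.

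The argument is essentially a one-line reduction to Möbius inversion; the only substantive step is recognizing the correct $f$ and $g$. The mildly tricky point is the bookkeeping of the prefactors $N^{|\pi|}$ vs.\ $(N)_{|\pi|}$: the sample-moment product $\E[\prod_B \hat m_{[\cdot]}]$ naturally comes with $N^{|\pi|}$ in the denominator (from \eqref{eq:samp-moment}), whereas the population-moment side carries a falling factorial $(N)_{|\rho|}$ (from counting ordered tuples of distinct observation indices within each equivalence class $[\sigma]$ in the proof of Lemma \ref{lem:bias}). Absorbing these into $f$ and $g$ as above is what makes the hypothesis of Lemma \ref{lem:mobius-2} hold verbatim and lets the inverted formula reproduce the definition of $\hat m_\pi$ with no reshuffling. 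Everything else is bookkeeping, and no further combinatorial identities beyond the two Möbius lemmas already in hand are required.
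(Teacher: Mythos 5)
Your proposal is correct and follows essentially the same route as the paper: the same choice of $f(\pi) = N^{|\pi|}\E[\prod_{B\in\pi}\hat m_{[i_j \mid j \in B]}]$ and $g(\pi) = (N)_{|\pi|}\prod_{B\in\pi} m_{[i_j \mid j \in B]}$, followed by the second M\"obius inversion lemma and linearity of expectation. No substantive differences.
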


	\begin{proof}
		Let us define two functions $f, g: \Pi_k \to \mathbb R$ by
		\begin{align*}
			f(\pi) &= N^{|\pi|} \E\left[
				\prod_{B \in \pi} \hat m_{[i_j \mid j \in B]}
			\right]  \\
			g(\pi) &= (N)_{|\pi|} \prod_{B \in \pi} m_{[i_j \mid j \in B]} 
		\end{align*}
		In terms of these functions, Lemma \ref{lem:bias} states that
		\[
			f(\pi) = \sum_{\rho \ge \pi} g(\rho), \qquad \forall \pi \in \Pi_k
		\]
		Therefore, applying the M\"obius inversion in Lemma \ref{lem:mobius-2}, we obtain 
		\eqref{eq:mobius-2-g}. Substituting in the definitions of $f$ and $g$ yields
		\begin{align*}
			(N)_{|\pi|} \prod_{B \in \pi} m_{[i_j \mid j \in B]}
			&= \sum_{\rho \ge \pi} (-1)^{|\pi| - |\rho|} \left(
				\prod_{\sigma \in \Sigma(\pi, \rho)} (|\sigma| - 1)! 
			\right) N^{|\rho|} \E \left[ \prod_{C \in \rho} \hat m_{[i_j \mid j \in C]} 
			\right] \\
			&= \E \left[ 
				\sum_{\rho \ge \pi} \left( (-1)^{|\pi| - |\rho|} N^{|\rho|}
				\prod_{\sigma \in \Sigma(\rho, \pi)} (|\sigma| - 1)! \right) \prod_{C \in 
				\rho} \hat m_{[i_j \mid j \in C]}
			\right] \\
			&= \E[ (N)_{|\pi|} \hat m_\pi]
		\end{align*}
		for all $\pi \in \Pi_k$, from which we immediately conclude \eqref{eq:unbiased}.
	\end{proof}

	We can now, at long last, use Lemma \ref{lem:unbiased} to prove that $k$-statistics 
	are unbiased estimators for cumulants.
	
	\begin{proof}[Proof (Theorem \ref{thm:k})]
		Substituting \eqref{eq:unbiased} into \eqref{eq:cum-as-mom}, we see that Theorem 
		\ref{lem:unbiased} and Lemma \ref{thm:cum-as-mom} together imply that
		\[
			\E\left[
				\sum_{\pi \in \Pi_k} (-1)^{|\pi| - 1} (|\pi| - 1)! \hat m_\pi
			\right] = \kappa_{[i_1, i_2, \dots, i_k]}
		\] 
		Expanding $\hat m_\pi$ using its definition \eqref{eq:mhat}, we obtain
		\[
			Q \triangleq \E \left[
				\sum_{\pi \in \Pi_k} 
					\frac{(-1)^{|\pi| - 1} (|\pi| - 1)!}{(N)_{|\pi|}}
				\sum_{\rho \ge \pi} \left(
				(-1)^{|\pi| - |\rho|} N^{|\rho|} \prod_{\sigma \in \Sigma(\rho, 
					\pi)} (|\sigma| - 1)! \right) \prod_{C \in \rho} \hat m_{[i_j \mid 
					j \in C]}
			\right]
			= \kappa_{[i_1, i_2, \dots, i_k]}
		\]
		where we have defined $Q$ as a placeholder for the expected value, for notational 
		compactness. The remainder of the proof is to simplify $Q$ down to $k_{[i_1, i_2, 
		\dots, i_k]}$ in \eqref{eq:k-stat}.
	
		The first thing to do is manipulate the sum, as follows:
		\begin{align*}
			Q &\triangleq \sum_{\pi \in \Pi_k} \sum_{\rho \ge \pi} 
				\frac{(-1)^{|\rho| - 1} (|\pi| - 1)! N^{|\rho|}}{(N)_{|\pi|}} 
				\left( \prod_{\sigma \in \Sigma(\rho, \pi)} (|\sigma| - 1)! \right)
				\prod_{C \in \rho} \hat m_{[i_j \mid j \in C]} \\
			&= \sum_{\rho \in \Pi_k} \sum_{\pi \le \rho} 
				\frac{(-1)^{|\rho| - 1} (|\pi| - 1)! N^{|\rho|}}{(N)_{|\pi|}} 
				\left( \prod_{\sigma \in \Sigma(\rho, \pi)} (|\sigma| - 1)! \right)
				\prod_{C \in \rho} \hat m_{[i_j \mid j \in C]} \\
			&= \sum_{\rho \in \Pi_k} (-1)^{|\rho| - 1} N^{|\rho|} \left(
				\sum_{\pi \le \rho} \frac{(|\pi| - 1)!}{(N)_{|\pi|}}
				\prod_{\sigma \in \Sigma(\rho, \pi)} (|\sigma| - 1)! 
			\right) \prod_{C \in \rho} \hat m_{[i_j \mid j \in C]}
		\end{align*}
		Because the refinements of $\rho$ are isomorphic to the product of lattices 
		$\Pi_{|C_1|} \times \Pi_{|C_2|} \times \cdots \times \Pi_{|C_{|\rho|}|}$ (where 
		each of the lattices corresponds to the partitions of a block of $\rho$), we can 
		replace the sum over $\pi \le \rho$ with a sum over tuples of partitions in this 
		product:
		\[
			R \triangleq \sum_{\pi \le \rho} \frac{(|\pi| - 1)!}{(N)_{|\pi|}}
			\prod_{\sigma \in \Sigma(\rho, \pi)} (|\sigma| - 1)! 
			= \sum_{\sigma_1 \in \Pi_{|C_1|}} \sum_{\sigma_2 \in \Pi_{|C_2|}} \cdots 
			\sum_{\sigma_{|\rho|} \in \Pi_{|C_{|\rho|}|}} 
			\frac{(\sum_{j = 1}^{|\rho|} |\sigma_j| - 1)!}{(N)_{\sum_{j=1}^{|\rho|} 
			|\sigma_j|}} \prod_{j = 1}^{|\rho|} (|\sigma_j| - 1)!
		\]
		Here $R$ is another placeholder for the middle term of this equation. Now, 
		observe that the dependence of this sum on $\sigma_j$ is entirely through the 
		size of $\sigma_j$. Furthermore, for a given size $b_j = |\sigma_j|$, there are 
		$\stirling{|C_j|}{b_j}$ partitions of size $b_j$ in $\Pi_{|C_j|}$. Therefore, we 
		simplify
		\[
			R = \sum_{b_1 = 1}^{|C_1|} \sum_{b_2 = 1}^{|C_2|} \cdots \sum_{b_{|\rho|} = 
			1}^{|C_{|\rho|}} \frac{(\sum_{j=1}^{|\rho|} b_j - 
			1)!}{(N)_{\sum_{j=1}^{|\rho|} b_j}} \left(
				\prod_{j=1}^{|\rho|} \stirling{|C_j|}{b_j} (b_j - 1)!
			\right)
			= \frac{c_\rho}{  N^{|\rho|} }
		\]
		invoking the definition of $c_\rho$ from \eqref{eq:coef}. Replacing $R$ with the 
		right side of this equation in our last expression for $Q$, we obtain
		\[
			Q = \sum_{\rho \in \Pi_k} (-1)^{|\rho| - 1} c_\rho \prod_{C \in \rho} \hat 
			m_{[i_j \mid j \in C]} = k_{[i_1, i_2, \dots, i_k]}
		\]
		which completes the proof.
	\end{proof} 

	\section{Computational Notes}
	\label{sect:computation}
	
	\begin{figure}
		\centering
		\includegraphics[width=0.45\linewidth]{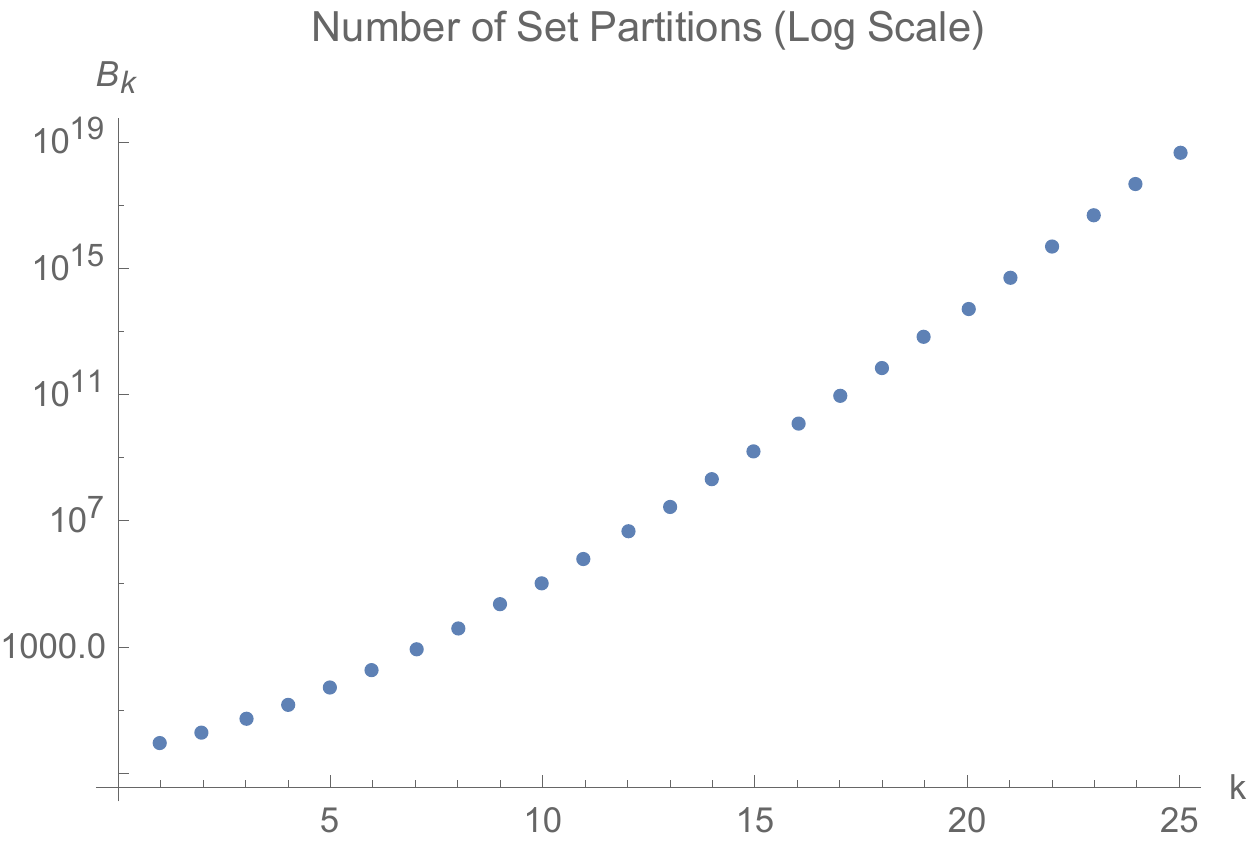} \hfill
		\includegraphics[width=0.45\linewidth]{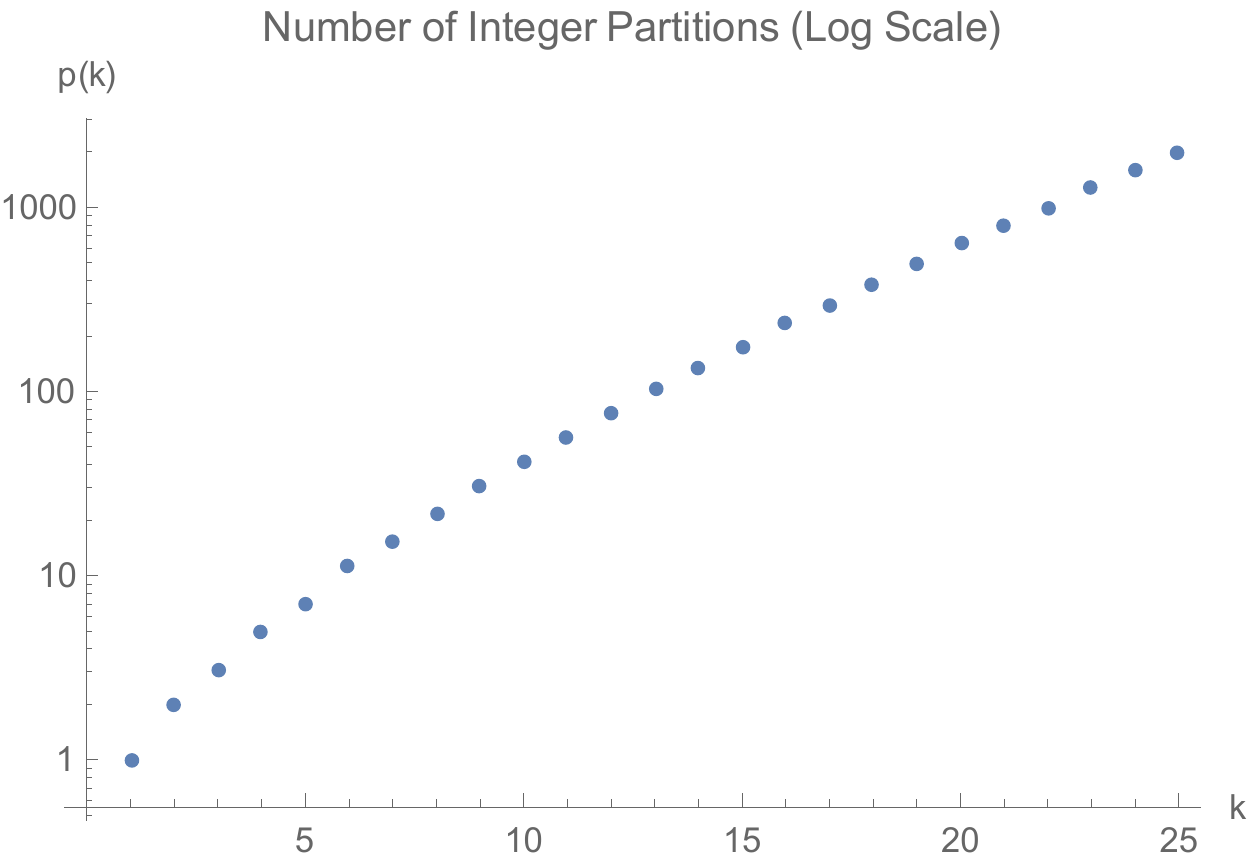}
		\caption{Plots related to the time complexity of computing \eqref{eq:k-stat}, 
		both on a log scale. The left plot shows Bell's number $B_k$, which reflects the 
		number of for loop iterations required to compute a $k$-statistic of order $k$. 
		The right plot shows the partition number $p(k)$, which counts the number of 
		unique values for the coefficient $c_\pi$.}
		\label{fig:nums}
	\end{figure}
	
	We will end our discussion of $k$-statistics with some comments on their computation. 
	The core of \eqref{eq:k-stat} is a for loop over the elements of $\Pi_k$, the 
	cardinality of which is given by Bell's number, $B_k$. The left plot in Figure 
	\ref{fig:nums} shows how Bell's number scales with $k$. At $k = 5$, the for loop only 
	needs to process 52 iterations, fast enough to perform repeated estimation within a 
	bootstrapping scheme, for example. By $k = 10$, the for loop must process almost 
	116,000 iterations---under a minute on a modern laptop, but certainly long enough to 
	make many repeated calculations cumbersome. By $k = 20$, we are up to roughly the 
	number of cells in the human body. Roughly speaking, fifth-order $k$-statistics are 
	about as high as one can reasonably go when performing many repeated calculations, 
	and tenth-order $k$-statistics are the threshold at which even one-off calculations 
	are too slow.\footnote{Another problem with high-order $k$-statistics is 
	statistical---the variance of the $k$-statistic scales poorly with order. As $k$ 
	increases, the necessary sample size $N$ increases very quickly, leading to slower 
	computation and infeasible data requirements. Variances of $k$-statistics are beyond 
	the scope of this document.}

	In order to make higher-order $k$-statistics more useful, a reasonable truncation of 
	the sum in \eqref{eq:k-stat} would be highly desirable. One possible approach is to 
	examine the formula in the large $N$ limit. Looking at \eqref{eq:coef} as $N \to 
	\infty$, we see that $c_\pi \to (|\pi| - 1)!$ since the $b_1 = b_2 = \cdots = 
	b_{|\pi|} = 1$ term of this sum dominates. Furthermore, the raw sample moment factors 
	converge on the sample moments (i.e., $\hat m_{[i_j \mid j \in C]} \to m_{[i_j \mid j 
	\in C]}$), which have no asymptotic dependence on $N$. With some prior knowledge of 
	the distribution, it may be possible to establish a hierarchy of summands $(|\rho| - 
	1)! \prod_{C \in \rho} m_{[i_j \mid j \in C]}$ in the large $N$ limit, allowing for a 
	corresponding truncation of the sum in \eqref{eq:k-stat}. Of course, this depends 
	highly on the raw moments of the distribution, and sufficient prior knowledge of 
	these moments may defeat the purpose of using $k$-statistics in the fist place.
	
	Instead of truncating the sum, another avenue to speed up the computation may be to 
	approximate the coefficients themselves, e.g., by assuming $c_\pi \approx (|\pi| - 
	1)!$ for large $N$. However, we suggest that approximating $c_\pi$ has little effect 
	on the efficiency, and that a better approach is to simply cache computed values of 
	the coefficients. The key is to observe that $c_\pi$ depends on the number and size 
	of each block, but not the blocks in and of themselves, so many terms of 
	the \eqref{eq:k-stat} sum will have identical value of the coefficients. As noted by
	\cite{EDN-GG:19}, the collections of block sizes in $\Pi_k$ actually correspond to 
	\textit{integer partitions} of $k$, so it is sufficient to compute and store one 
	value of $c_\pi$ per integer partition. The power of this trick lies in the fact that 
	the number of integer partitions (called the \textit{partition number}), $p(k)$, is 
	\textit{much} smaller than Bell's number $B_k$, at least when $k$ is of moderate size 
	or larger. The right plot in Figure \ref{fig:nums} shows how $p(k)$ scales with $k$ 
	much slower than $B_k$. For example, computing fifth-order $k$-statistics only 
	requires evaluating and storing $p(5) = 7$ unique values of $c_\pi$. Tenth-order 
	$k$-statistics involve $p(10) = 42$ unique values. And $k$-statistics of order 20, 
	which are intractable due to the size of $|\Pi_{20}|$, would require only 627 unique 
	values of the coefficients. In other words, the number of unique $c_\pi$ values is 
	very small compared to the size of $|\Pi_k|$, so precise evaluation and storage of 
	these coefficients is cheap. 
	
	We also note that \eqref{eq:k-stat} is easy to vectorize, i.e., it is straightforward 
	to evaluate $k$-statistics on several different samples simultaneously. This 
	vectorization is useful when evaluating $k$-statistics within a resampling scheme, 
	like jackknifing or bootstrapping. The computation is amenable to vectorization due 
	to the simple nature of the operations involved: linear combination (after computing 
	the $c_\pi$ coefficients), multiplication, and power sums are basic operations 
	that are supported in most libraries for array math.
	
	Finally, we take this moment to advertise \textit{PyMoments} \cite{KDS:20}, our own 
	Python library for computing multivariate $k$-statistics. \textit{PyMoments} 
	automatically caches the $c_\pi$ coefficients, saving them in a tree-based data 
	structure that can be re-used between different $k$-statistic evaluations (provided 
	that the sample size $N$ is the same). \textit{PyMoments} also supports vectorized 
	computation of $k$-statistics. Of course, we are open to feedback on how to improve 
	this library.

	\section{Conclusion}
	
	This document has provided an explicit expression for multivariate $k$-statistics, 
	allowing for unbiased estimation of multivariate cumulants. We were also able to 
	prove the lack of bias using fairly rudimentary combinatorics, and we provided a 
	light discussion on the computational aspect of $k$-statistics. It is our hope that 
	readers may be able to push some of the ideas of this paper forward into new, more 
	efficient algorithms and applications involving $k$-statistics.

	\bullobib
	
\end{document}